\documentclass[runningheads]{llncs}
\usepackage[T1]{fontenc}
\usepackage{graphicx}
\usepackage{natbib}
\usepackage{algorithm}
\usepackage{algpseudocode}
\usepackage{amssymb}
\usepackage{amsmath}
\usepackage{mathabx}
\usepackage{mathtools}
\usepackage{listings}
\usepackage{caption}
\usepackage{float}
\usepackage{tcolorbox}
\usepackage{pgfplots}

\newcommand{\Ac}{\mathcal{A}}
\newcommand{\RR}{\mathcal{R}}
\newcommand{\Om}{\Omega}
\newcommand{\Gval}{V}

\newcommand{\SOL}{\mathcal{S}}

\newcommand{\NE}{\mathrm{NE}}

\newcommand{\sDSE}{\mathrm{sDSE}}
\newcommand{\DSE}{\mathrm{DSE}}
\newcommand{\wDSE}{\mathrm{wDSE}}
\newcommand{\sNE}{\mathrm{sNE}}
\newcommand{\pNE}{\mathrm{pNE}}
\newcommand{\CE}{\mathrm{CE}}
\newcommand{\CCE}{\mathrm{CCE}}
\newcommand{\RRo}{\mathcal{R}^{o}}
\newcommand{\QQ}{\mathcal{Q}}
\newcommand{\Ad}{\mathbf{a}^{\dagger}}
\newcommand{\ad}{a^{\dagger}}
\newcommand{\ac}{\mathbf{a}}

\newcommand{\Az}{\textsc{A-Zero}}
\newcommand{\Ao}{\textsc{A-General}}
\newcommand{\Azip}{\textsc{A-Zero}}
\newcommand{\Aoip}{\textsc{A-General}}
\newcommand{\Azdp}{\textsc{A-Zero}}
\newcommand{\Aodp}{\textsc{A-General}}

\newcommand{\BigOh}[1]{\mathcal{O}\left({#1}\right)}
\newcommand{\ints}[1]{\left[{#1}\right]}
\newcommand{\card}[1]{\left|{#1}\right|}

\spnewtheorem{thm}{Theorem}{\bfseries}{\itshape}
\spnewtheorem{cor}[thm]{Corollary}{\bfseries}{\itshape}
\spnewtheorem{lem}[thm]{Lemma}{\bfseries}{\itshape}
\spnewtheorem{prop}[thm]{Proposition}{\bfseries}{\itshape}
\spnewtheorem{conj}[thm]{Conjecture}{\bfseries}{\itshape}
\spnewtheorem{obs}[thm]{Observation}{\bfseries}{\itshape}
\spnewtheorem{clm}[thm]{Claim}{\bfseries}{\itshape}

\spnewtheorem{algo}{Algorithm}{\bfseries}{\itshape}

\spnewtheorem{df}[thm]{Definition}{\bfseries}{\rmfamily}
\spnewtheorem{eg}[thm]{Example}{\bfseries}{\rmfamily}
\spnewtheorem{asm}[thm]{Assumption}{\bfseries}{\rmfamily}
\spnewtheorem{cond}[thm]{Condition}{\bfseries}{\rmfamily}

\spnewtheorem{rmk}[thm]{Remark}{\itshape}{\rmfamily}

\allowdisplaybreaks

\begin{document}
\title{The Game Changer Problem: Controlling Equilibria with Discrete Rewards}
\titlerunning{The Game Changer Problem}
%
\author{Brandon Han \and Young Wu \and Shiyun Cheng \and Xiaojin Zhu}
\authorrunning{Brandon Han et al.}
%
\institute{University of Wisconsin - Madison}
%
\maketitle              
\begin{abstract}
We introduce the game changer problem, where an external designer modifies a game’s reward matrix to make a target pure action profile the unique equilibrium, subject to the constraint that all entries of the reward matrix come from a finite set. We give simple feasibility characterizations for two-player zero-sum games and general-sum games, and the discrete reward structure yields exact optimality and enables efficient dynamic programming algorithms, providing a sharper alternative to prior continuous reward redesign formulations based on linear programming.

\keywords{Game Theory}
\end{abstract}

\section{Introduction} 
We consider the task of controlling joint player behaviors by changing payoffs. In many real multi-agent systems, payoffs cannot be arbitrary real numbers; they must come from a small, meaningful set. This discreteness creates a new algorithmic problem and, in fact, enables exact optimal redesign via efficient dynamic programming rather than generic linear-programming formulations. In our setup, there is a new system administrator (the game changer), and $n$ strategic agents (players), for example, a company with a new CEO and $n$ employees, a sports competition with a new tournament organizer and $n$ competitors, or even a kingdom with a new king and $n$ subjects. The existing system is operated as an $n$-player general-sum game with reward function $\RRo$, set by the previous administrator and played by the agents. $\RRo_{i}\left(\ac_i, \ac_{-i}\right)$ is the payoff to agent $i \in \left[n\right]$ with joint action $a = \left(\ac_i, \ac_{-i}\right)$. Suppose it is known that the players are rational and seek a Nash Equilibrium $\left(\NE\right)$. While a possibly mixed $\NE$ is guaranteed to exist, there could be many different NEs, so it is still hard for the new administrator to predict how the game will actually play out. In addition, the new administrator may have a different value $V\left(\ac\right)$ for a joint action $\ac$ than the "social welfare" $\displaystyle\sum_{i=1}^{n} \RR_i\left(\ac\right)$. For example, the administrator may prefer the joint action in which all agents work together on a larger project rather than the current equilibrium, in which the agents work on smaller projects on their own. The NEs under the old game $\RRo$ may have low values. Thus, the new administrator wants to reform the system (or change the game, hence the name game changer). More importantly, our paper introduces a novel consideration when changing the game, which is that the new game rewards only contain entries from a finite set. In our system administrator story, entries could be restricted to $5$ valid values, say, $A, B, C, D, F$ for an agent's evaluation score. In the other examples, the discrete-valued entries represent integer hourly wages in the company environment, a finite number of ratings in sports competitions, or the number of gold coins in the kingdom example. The discrete reward constraint is not merely a modeling restriction: because finite sets have a minimum positive gap between distinct values, strict inequalities can be handled exactly by adjacent reward levels, and this avoids the relaxations in prior continuous formulations and allows us to compute the exact optimal redesigned games.

Concretely, the $n$ players and their action spaces are kept the same, but the game changer changes the payoff function $\RRo$ to $\RR$ with the following considerations,
\begin{enumerate}
\item What actions the players will choose under $\RR$ should be highly predictable. For example, it would be nice if $\RR$ has a unique pure $\NE$ $\Ad$.
\item What action profile the players use is valuable to the game changer; for example, $V\left(\Ad\right)$ is high in the unique pure $\NE$.
\item The game changer may have preferences on $\RR$ itself regardless of the players' behaviors. For example, the reform is palatable if $\left\|\RR - \RRo\right\|_{1}$ is small, $\RR$ should contain only entries from a discrete set, and in fact, one focus of the present paper is an extreme form of the latter: $\RR$ must take value in a given discrete (and potentially finite) set $\Omega$.
\end{enumerate}

Our problem is adjacent to mechanism design and implementation theory. Our problem does not involve players with private types, and rather than designing message spaces, allocation rules, or transfers, we keep the players and action spaces fixed and directly redesign the reward matrices. We also do not consider the game changer as one of the players due to its asymmetric role in the problem,
\begin{itemize}
\item The game changer's action set is qualitatively different and includes all game reward functions $\RR$ with some fixed action space $\Ac$, which is not an ordinary move like a finite set $\Ac_i$ for player $i$,
\item The game changer has full commitment power, and there is no strategic uncertainty from the players' perspective. In particular, even in the Markov game setting with sequential interactions, the game changer cannot adaptively change the rules of the game.
\end{itemize}

The main contributions of this paper are,
\begin{itemize}
\item We formulate the discrete game-changer problem, where an external game changer redesigns a reward matrix using only values from a prescribed finite set $\Omega$, with the goal of inducing a unique target equilibrium while minimizing the deviation from the original game.
\item We provide necessary and sufficient feasibility conditions: $|\Omega| \geq 3$ for zero-sum games, for installing a unique strict pure Nash equilibrium, which also yields a unique Nash equilibrium, correlated equilibrium, or coarse correlated equilibrium; and $|\Omega| \geq 2$ for general-sum games, for installing a strict dominant strategy equilibrium.
\item We derive an exact dynamic-programming algorithm for the discrete game-changer problem, avoiding generic integer/linear programming formulations, thereby efficiently computing the optimal redesigned reward matrix. Unlike prior continuous reward-redesign formulations, the discrete setting gives exact optimality rather than approximate optimality through strict-inequality relaxations.
\end{itemize}

It is also easy to see the security relevance of this task: an adversarial attacker may play the role of the administrator (game changer) and hack the payoffs to control agent (player) behavior, and our work exposes the quantitative game-theoretic threat. This will enable defenders to target specifically these types of attacks. For instance, by diligently checking and catching small changes to $\RRo$, the defender effectively imposes a high cost on $\left\|\RRo - \RR\right\|$ and may force the attacker to give up because $\RR = \RRo$ will be the optimal solution to the reform. But a benevolent policymaker can also benefit from the present paper by computing the optimal new policy $\RR$ to encourage beneficial behaviors (for example, all players volunteer in the volunteer game).

\section{Related Work} 

Our work is the closest to the mechanism design literature, including \cite{maskin2002implementation, moore1990nash}. Classical implementation theory papers focus on implementing a desired social choice rule under private information. Our problem is much narrower but more concrete. The design variable is the new reward matrix, not a general mechanism with message spaces, and the problem does not involve transfers, as in the k-implementation problem in \cite{monderer2003k}. Moreover, we require uniqueness of the equilibrium, unlike many mechanism design problems that rely on indifference.

This paper is a continuation of the line of work focused on game redesign and reward engineering, including \cite{eshoa2024precision, ma2021game, wu2023minimally, mcmahan2025optimally}. The novel modeling choice is a finite or discrete set of rewards, $\Omega$. Given this realistic assumption, we provide a simple feasibility condition and an efficient dynamic programming algorithm that computes the exact optimal reward matrix, whereas in previous papers the optimization is deferred to a linear program, yielding only approximately optimal solutions due to the relaxation of strict inequality constraints.

The algorithms for game redesign have also been applied to adversarial reinforcement learning  \cite{rakhsha2021policy, zhang2020adaptive, rakhsha2021reward, xu2024reward}, and adversarial multi-agent reinforcement learning \cite{wu2023reward, wu2024data, mcmahan2024optimal, liu2023efficient}. In this literature, reward poisoning manipulates training data or reward feedback to induce policies, whereas our work focuses on directly modifying the payoff matrices to install a target equilibrium behavior under the discrete-set entry constraints.

\section{Problem Setup} 
We call the administrator's reform "the game changer problem" because it changes the old game $\RRo$ to a new game $\RR$. We define the general form of this problem below.

First we define the players by the tuple $\left(n, \left\{\Ac_i\right\}_{i=1}^{n}, \SOL\right)$ where $\SOL$ is the solution concept they seek, for example, Nash Equilibrium $\left(\NE\right)$, Dominant Strategy Equilibrium $\left(\DSE\right)$, Correlated Equilibrium $\left(\CE\right)$, and so on. Given a game $\RR$, we also overload notation and let $\SOL\left(\RR\right)$ denote the set of equilibria under that solution concept. For example, $\SOL\left(\RR\right)$ can be the set of NEs (pure and mixed) of $\RR$.

We characterize the game changer's three considerations with three functions, which are inputs to our problem,
\begin{enumerate}
\item $P\left(\SOL\left(\RR\right)\right)$ is the predictability of player behaviors. A simple example is $P\left(\SOL\left(\RR\right)\right) = -\left| \SOL\left(\RR\right) \right|$, the negative cardinality, for instance, the number of distinct NEs. The negation is to make predictability high when the solution is simple;
\item $V\left(\SOL\left(\RR\right)\right)$ is the extrinsic value to the game changer on how the game will be played out. If $\SOL\left(\RR\right)$ is a unique pure $\NE$ with joint action $\ac, V\left(\SOL\left(\RR\right)\right) = V\left(\ac\right)$ specifies the value to the game changer when the players are essentially forced to play the only rational option $\ac$. For example, $\ac_1 = \ac_2 = ... = \ac_{n} = \text{\;volunteer\;}$ may have higher value to the game changer. Note $V$ is an input to our problem, and is in general distinct from the sum of payoffs to the players $\displaystyle\sum_{i=1}^{n} \RR_{i}\left(\ac\right)$. This flexibility sets us apart from the standard notion of social welfare maximization. When $\SOL\left(\RR\right)$ has multiple equilibria, $V$ can be the average or worst-case value of these members.
\item $C\left(\RR, \RRo\right)$ is the cost of installing the new game $\RR$, potentially with reference to the old game $\RRo$. For example, $C\left(\RR, \RRo\right) = \left\|\RR - \RRo\right\|_{1.}$
\end{enumerate}

With these definitions, the game-changer problem is the following optimization problem,
\begin{align}
&\RR^\star \in \mathop{\mathrm{argmax}}_{\RR \in \mathbb{R}^{\times_{i=1}^{n} \left| \Ac_i \right|}} P\left(\SOL\left(\RR\right)\right) + V\left(\SOL\left(\RR\right)\right) - C\left(\RR, \RRo\right) \label{eq:ggc}
\end{align}
The general problem ~\eqref{eq:ggc} is powerful but difficult to solve for arbitrary $P, V, C$ functions. In this paper, we focus on a special yet meaningful case often encountered in applications.

We impose two restrictions,
\begin{enumerate}
\item The new game $\RR$ must induce a unique pure strategy solution: $\SOL\left(\RR\right) = \left\{\Ad\right\}$. For example, the new game must have a unique pure $\NE$. It does not matter how many solutions the old game has. This restriction is highly desirable to the game changer: the game changer will know precisely how the players will behave under $\RR$. We may stylize the predictability function as an infinite indicator such that $P\left(\SOL\left(\RR\right)\right) = 0$ if $\SOL\left(\RR\right)$ is a unique pure joint action, and $-\infty$ otherwise.
\item The entries in $\RR$ must be in a given discrete set $\Omega$. For example, in win-lose-tie games such as rock-paper-scissors, it is natural to map each outcome to a different number so $\Omega = \left\{-1, 0, 1\right\}$. It is less natural (though not impossible) to allow entries of $\RR$ to take arbitrary real values. Another justification stems from the security angle, where an attack with "strange" payoff values after reward poisoning may be easily detected, but an attack that stays within $\Omega$ can be hard to detect. We may stylize the cost function as $C\left(\RR, \RRo\right) = \infty$ if some elements in $R \neq \Omega$, and $\left\|\RR - \RRo\right\|_{1}$ otherwise. We consider finite sets $\Omega$, but in the appendix, we explain how problems with $\Omega = \mathbb{Z}$ are equivalent to problems with finite $\Omega$.
\end{enumerate}

Under these restrictions, we arrive at the Fully Predictable Discrete Game Change Problem defined by the tuple $\left(\RRo, \Omega, n, \left\{\Ac_i\right\}_{i=1}^{n}, \SOL\right)$. Since the solution to $\RR$ must be a unique joint action $\Ad$, we can first enumerate (pure) joint actions $\Ad$, then, conditioned on it, find the optimal $\RR$ with discrete entries. Furthermore, we can separate the cost function $C\left(\RR, \RRo\right)$ into a hard discrete constraint and a soft distance-to $\RRo$ norm, that is,
\begin{align}
\displaystyle\max_{\Ad} \displaystyle\max_{\RR} & \, V\left(\Ad\right) - \left\|\RR - \RRo\right\|_1 \label{eq:fpdgc}
\\ s.t. & \,\SOL\left(\RR\right) = \left\{\Ad\right\}\nonumber
\\ & \,\RR\left(\ac\right) \in \Omega, \forall\; \ac \in \Ac\nonumber
\end{align}
The optimization ~\eqref{eq:fpdgc} for the Fully Predictable Discrete Game Changer Problem has a difficult hard constraint $\SOL\left(\RR\right) = \left\{\Ad\right\}$, namely the solution to $\RR$ must be the unique pure strategy $\Ad$. Our approach depends on whether $\RR$ should be a two-player zero-sum game or $n$-player general-sum game,
\begin{itemize}
\item If $\RR$ is a two-player zero-sum game, we aim to install a strict $\NE$ at $\Ad$.
\item If $\RR$ is an $n$-player general-sum game, we aim to install a strict $\DSE$ at $\Ad$.
\end{itemize}

In the following sections, we decompose \eqref{eq:fpdgc} into independent inner problems indexed by $\ad$, and focus on efficient algorithms to solve the inner problem exactly; the outer problem can be solved by enumeration over the (finite number of) pure action profiles.

\section{Game Changer on Two-Player Zero-sum Games}

Our first setting focuses on two-player zero-sum games. Here, we require both the given reward matrix $\RRo$ and the output matrix $\RR$ to be zero-sum. Our strategy is to iterate through every pure strategy $\Ad \in \Ac$ and to install each $\Ad$ as a strict Nash equilibrium. 

\begin{df} [Strict Nash Equilibrium] \label{eg:sne} 
    Let $\RR$ be the reward matrix of a zero-sum normal form game with action set $\Ac = \Ac_1 \times \Ac_2$. A pure joint strategy $\Ad \in \Ac$ is a strict Nash Equilibrium ($\sNE$) if for every $a_1 \in \Ac_1$ and $a_2 \in \Ac_2$, we have,
    \begin{align*}
        & a_1 \neq \ad_1 \implies \RR\left(a_1, \ad_2\right) < \RR\left(\ad_1, \ad_2\right) \\
        & a_2 \neq \ad_2 \implies \RR\left(\ad_1, a_2\right) > \RR\left(\ad_1, \ad_2\right).
    \end{align*}
\end{df}

Mathematically, we are solving the following minimization equation. 
\begin{align}
    \max_{\Ad} \max_{\RR} &\, \Gval\left(\Ad\right) - \left\|\RR - \RRo\right\|_1 \label{eq:framework-a0} \\
    \text{s.t.} 
    &\,\RR\left(a_1, \ad_2\right) < \RR\left(\ad_1, \ad_2\right) \quad \forall a_1 \neq \ad_1 \nonumber \\
    &\, \RR\left(\ad_1, a_2\right) > \RR\left(\ad_1, \ad_2\right) \quad \forall a_2 \neq \ad_2 \nonumber\\
    &\, \RR\left(\ac\right) \in \Om \quad \forall \ac \in \Ac \nonumber
\end{align}

This approach has several advantages. One such advantage is that the only additional constraints are linear inequalities, allowing us to directly apply integer programming. Moreover, every inequality involves the term $\RR\left(\Ad\right)$. This additional structure allows us to develop an efficient, closed-form algorithm for finding optimal solutions. We discuss this algorithm in Section~\ref{sec:alg-a0}. 

Another advantage is feasibility. We show a sufficient and necessary condition for the existence of a feasible solution to \eqref{eq:framework-a0}. We assume throughout this section and the next that each player has at least $2$ actions since degenerate one-action cases are trivial.

\begin{prop} [$\Az$ Feasibility] \label{prop:azex} 
    \eqref{eq:framework-a0} has a feasible solution if and only if $\left| \Om \right| \geq 3$.
\end{prop}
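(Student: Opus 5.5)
The plan is to prove the two directions separately. Both rest on one observation: every constraint in \eqref{eq:framework-a0} compares an entry of $\RR$ with the single pivot entry $\RR\left(\Ad\right)$. Feasibility therefore reduces to a question about the ordering of the values in $\Om$. Throughout I use the standing assumption that each player has at least two actions. I also note that the objective is irrelevant for feasibility, since for finite $\Om$ the objective is always finite.

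\textbf{Sufficiency.} Assume $\card{\Om} \geq 3$ and pick three values $\omega_1 < \omega_2 < \omega_3$ in $\Om$. Fix any target $\Ad \in \Ac$ and define $\RR$ as follows:
\begin{itemize}
\item $\RR\left(\Ad\right) = \omega_2$;
\item $\RR\left(a_1, \ad_2\right) = \omega_1$ for every $a_1 \neq \ad_1$;
\item $\RR\left(\ad_1, a_2\right) = \omega_3$ for every $a_2 \neq \ad_2$;
\item every remaining entry (off both the row and column of $\Ad$) is set to $\omega_2$, or to any element of $\Om$, since these entries are unconstrained.
\end{itemize}
All entries lie in $\Om$, and both families of strict inequalities in Definition~\ref{eg:sne} hold. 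Hence the inner problem for this $\Ad$, and so \eqref{eq:framework-a0}, is feasible.

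\textbf{Necessity.} Suppose $\RR$ is feasible for some $\Ad$. Since $\card{\Ac_1} \geq 2$, there is some $a_1 \neq \ad_1$, which gives an entry $\RR\left(a_1, \ad_2\right) \in \Om$ strictly less than $\RR\left(\Ad\right)$. Since $\card{\Ac_2} \geq 2$, there is some $a_2 \neq \ad_2$, which gives an entry $\RR\left(\ad_1, a_2\right) \in \Om$ strictly greater than $\RR\left(\Ad\right)$. We thus obtain three elements of $\Om$ ordered as
\[
\RR\left(a_1, \ad_2\right) < \RR\left(\Ad\right) < \RR\left(\ad_1, a_2\right).
\]
They are pairwise distinct, so $\card{\Om} \geq 3$. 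Equivalently, if $\card{\Om} \leq 2$, then $\Om$ has no element with values strictly above and strictly below it. In that case $\RR\left(\Ad\right)$ cannot satisfy both constraints for any $\Ad$, so no feasible point exists.

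I expect no real obstacle here. The only delicate points are to invoke the at-least-two-actions assumption explicitly, since it is what makes the necessity direction work, and to note that zero-sum consistency is automatic. Player~2's payoff is $-\RR$, so the matrix $\RR$ alone encodes the game and the only restriction on it is $\RR\left(\ac\right) \in \Om$. No symmetry of $\Om$ under negation is required, because only the single matrix $\RR$ is constrained to have entries in $\Om$.
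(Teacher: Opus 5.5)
Your proof is correct and follows the paper's argument. For necessity you use the same chain $\RR\left(a_1,\ad_2\right) < \RR\left(\Ad\right) < \RR\left(\ad_1,a_2\right)$ forcing three distinct values of $\Om$. For sufficiency you use the same construction: pivot $\omega_2$, $\omega_1$ in the column of $\ad_2$, $\omega_3$ in the row of $\ad_1$, and $\omega_2$ elsewhere. Your explicit use of the at-least-two-actions assumption in the necessity direction makes precise what the paper's one-line version leaves implicit.
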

\begin{proof}
    Pick any $\Ad \in \Ac$. Consider the $\sNE$ condition. For any $a_1 \in \Ac_1$ and $a_2 \in \Ac_2$ such that $a_1 \neq \ad_1$ and $a_2 \neq \ad_2$, 
    \begin{align}
        \RR\left(a_1, \ad_2\right) < \RR\left(\ad_1, \ad_2\right) < \RR\left(\ad_1, a_2\right). \nonumber
    \end{align}
    The inequalities dictate that we need three different values to satisfy the condition for a fixed $a_1 \in \Ac_1$ and $a_2 \in \Ac_2$. If we know there exist three distinct $\omega_1 < \omega_2 < \omega_3 \in \Om$, then we can always construct the following feasible solution to \eqref{eq:framework-a0}.
    \begin{center}
    \begin{tabular}{c | c c c c}
        & $\ad_2$ & $a_2^{\left(1\right)}$ & $\cdots$ & $a_2^{\left(|\Ac_2|-1\right)}$  \\
        \hline
        $\ad_1$ & $\omega_2$ & $\omega_3$ & $\cdots$ & $\omega_3$ \\
        $a_1^{\left(1\right)}$ & $\omega_1$ & $\omega_2$ & $\cdots$ & $\omega_2$ \\
        $\vdots$ & $\vdots$ & $\vdots$ & & $\vdots$ \\
        $a_1^{\left(|\Ac_1|-1\right)}$ & $\omega_1$ & $\omega_2$ & $\cdots$ & $\omega_2$ \\
    \end{tabular}
    \end{center}
\end{proof}

The most important advantage of our approach is its strength. With the reward matrix constructed by our algorithm, the game changer can simultaneously control players seeking a wide variety of solution concepts beyond those that seek Nash equilibria. Notably, the game changer may control players that follow strict Nash Equilibria (sNE), pure Nash Equilibria (pNE), Nash Equilibria (NE), Correlated Equilibria (CE), and Coarse Correlated Equilibria (CCE). This stems from the fact that $\Ad \in \Ac$ is a strict Nash equilibrium of a finite two-player zero-sum game if and only if it is also the unique NE, CE, and CCE. 

\begin{lem} [$\Az$ Implications for Game Changer Feasibility and Optimality] \label{prop:azfs} 
    If $\RR$ is a feasible solution to ~\eqref{eq:framework-a0}, then $\RR$ is a feasible solution to ~\eqref{eq:fpdgc} for $S \in \left\{\sNE, \pNE, \NE, \CE, \CCE\right\}$. Furthermore, if $\RR$ is an optimal solution to ~\eqref{eq:framework-a0}, then $\RR$ is an optimal solution to ~\eqref{eq:fpdgc} for $S \in \left\{\sNE, \pNE, \NE, \CE, \CCE\right\}$.
\end{lem}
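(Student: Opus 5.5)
The plan has two parts: a feasibility transfer, which is a purely game-theoretic implication, and an optimality transfer, which compares feasible sets. Optimality is where the real difficulty sits.

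\textbf{Feasibility.} Let $\RR$ be feasible for \eqref{eq:framework-a0} with target $\Ad$. Its entries lie in $\Om$, so only the constraint $\SOL(\RR)=\{\Ad\}$ has to be checked, one solution concept at a time.

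For $\sNE$ this constraint is the definition of \eqref{eq:framework-a0}. For uniqueness, suppose $\mathbf{b}\neq\Ad$ is also a strict NE; in a zero-sum game all pure NE give the same value, so we can argue directly. Say $b_1\neq\ad_1$. Then
\[
\RR(b_1,b_2)\le \RR(b_1,\ad_2)<\RR(\ad_1,\ad_2)\le \RR(b_1,b_2),
\]
where the first inequality uses that the column player cannot gain by moving from $b_2$ to $\ad_2$, and the last uses the interchangeability/equal-value property of saddle points. This is a contradiction. The same argument handles $\pNE$.

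\textbf{Uniqueness for $\CCE$.} Since $\NE\subseteq\CE\subseteq\CCE$, it suffices to show that $\CCE(\RR)=\{\Ad\}$. Let $\mu$ be a CCE with game value $v=\RR(\Ad)$. In a zero-sum game every CCE gives the row player expected payoff $v$: the row player can secure $v$ by deviating to $\ad_1$, and symmetrically for the column player. The row player's no-deviation constraint toward $\ad_1$ then reads
\[
\mathbb{E}_\mu\bigl[\RR(\ad_1,a_2)\bigr]\le v.
\]
Every column with $a_2\neq\ad_2$ satisfies $\RR(\ad_1,a_2)>v$, so $\mu$ puts no mass on such columns. By the symmetric argument, $\mu$ puts no mass on rows $a_1\neq\ad_1$. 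Hence $\mu=\delta_{\Ad}$.

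\textbf{Optimality.} The objectives of \eqref{eq:framework-a0} and \eqref{eq:fpdgc} are identical. Since feasibility has just been shown to transfer, it remains to show the reverse containment for the optimum: no $\RR'$ with $\SOL(\RR')=\{\Ad\}$ and entries in $\Om$ achieves a strictly higher value than the $\sNE$ optimum.

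For $S=\sNE$ the two problems coincide, so there is nothing to prove. For the other concepts, I would try to show that every $\RR'$ whose unique $\NE$/$\CE$/$\CCE$ (or unique $\pNE$) is $\Ad$ already satisfies the strict inequalities. The plan is a contrapositive: if some deviation ties, for instance $\RR'(a_1,\ad_2)=\RR'(\Ad)$, build a second equilibrium by mixing $\ad_1$ with $a_1$.

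\textbf{Main obstacle.} This contrapositive is exactly where the argument is fragile. In the $2\times 2$ game with rows $(0,1)$ and $(0,-1)$, the profile (row $1$, column $1$) is the unique NE yet is not strict. The mixture of the two rows fails to be an equilibrium, because the column player then strictly prefers column $2$.

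The first thing I would try is to exploit the discreteness of $\Om$ together with the structure of the $\ell_1$ cost. Starting from an optimal $\RR'$ with tied deviations, I would look for an $\RR''$ in the feasible set of \eqref{eq:framework-a0} whose cost is no larger. For example, shift the whole row $a_1$ down by one level of $\Om$ wherever that breaks ties without increasing $\|\RR''-\RRo\|_1$.

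If that fails in general, the honest fallback is to read the optimality claim as ranging over the strict installations that the section commits to ("we aim to install a strict $\NE$"). Under that reading, optimality follows immediately from feasibility and the identical objectives.
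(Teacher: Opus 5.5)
\textbf{The optimality half is not proved.} Your feasibility half is sound. The equal-value argument rules out any other pure NE. Your CCE argument, together with $\NE\subseteq\CE\subseteq\CCE$, shows that a strict saddle point is the unique NE, CE and CCE. The gap is optimality, which you never establish. What is needed is the converse: if $\SOL(\RR')=\{\Ad\}$ for $\SOL\in\{\NE,\CE,\CCE\}$, then $\Ad$ is a strict NE of $\RR'$. Then for each $\Ad$ the feasible sets of \eqref{eq:framework-a0} and \eqref{eq:fpdgc} coincide, and optimality transfers trivially. The paper's own sketch also relies on this step without proof, via the ``fact'' stated just before the lemma.

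\textbf{Your $2\times 2$ example is miscomputed.} You abandon the contrapositive because of it, but in the paper's convention the column player minimizes $\RR$. Against $x=(1-\epsilon,\epsilon)$, column $2$ yields $1-2\epsilon\ge 0$, so column $1$ remains a best response. The row player is indifferent against column $1$. Hence $(x,\text{column }1)$ is an NE for every $\epsilon\in[0,1/2]$, and the NE is not unique. The row-shifting heuristic is not an argument. The fallback reading turns the claim into a tautology rather than proving it.

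\textbf{The missing ingredient is strict complementarity for matrix games.} This is Bohnenblust--Karlin--Shapley, or equivalently Goldman--Tucker applied to the value LP. Every zero-sum game with value $v$ has optimal strategies $(x,y)$ such that every row outside $\mathrm{supp}(x)$ earns strictly less than $v$ against $y$. Likewise, every column outside $\mathrm{supp}(y)$ concedes strictly more than $v$ against $x$. The NE set of a zero-sum game is the product of the optimal-strategy sets. So a unique NE forces $(x,y)=(e_{\ad_1},e_{\ad_2})$, which is exactly strictness. A unique $\CE$ or $\CCE$ implies a unique $\NE$, so those cases follow too.

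Your one-sided mixture $(1-\epsilon)e_{\ad_1}+\epsilon e_{a_1}$ only works when the column side has no ties. With ties on both sides, every single-strategy mixture can fail; a matching-pennies block on the tied rows and columns is an example. That is why the general theorem is needed.

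\textbf{Your example does refute the lemma for $\pNE$.} The profile $(1,1)$ is the only pure NE of $\begin{bmatrix}0&1\\0&-1\end{bmatrix}$, yet it is not strict. Take this matrix as $\RRo$, with $\Om=\{-1,0,1\}$ and $V\equiv 0$. Then $\RR=\RRo$ has cost $0$ in \eqref{eq:fpdgc}. Every feasible point of \eqref{eq:framework-a0} costs at least $1$, since $\RRo$ has no strict NE and all entries are integers. So the optimality claim is false for $\pNE$; the paper's sketch never treats $\pNE$ separately. It can only be proved for $\sNE$, $\NE$, $\CE$ and $\CCE$.
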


\begin{proof}
In zero-sum games, CE and CCE cannot improve beyond the minimax value, and if a pure action profile is a strict saddle point, all equilibrium distributions must put probability one on it. Conversely, since all NEs are also CEs and CCEs, $\ad$ being a unique CE or CCE implies that it is also a unique NE.
\end{proof}

In general, the optimal solution of \eqref{eq:framework-a0} is not necessarily unique. Consider the following example. 
\begin{align}
    & \RRo = \begin{bmatrix} 0 & 2 \\ 0 & 0 \end{bmatrix} &
    & \Om = \left\{-1, 0, 1, 2\right\}
    \nonumber
\end{align}
We can construct two different optimal solutions for \eqref{eq:framework-a0} using these inputs, namely
\begin{align}
    & \RR_1 = \begin{bmatrix} 1 & 2 \\ 0 & 0 \end{bmatrix} &
    & \RR_2 = \begin{bmatrix} 0 & 2 \\ -1 & 0 \end{bmatrix}
    \nonumber
\end{align}
Note that both modifications implement the same strict Nash equilibrium with an $L_1$ cost of $1$. 

\subsection{Algorithm $\Azdp$} \label{sec:alg-a0}

As formerly mentioned, one approach to solving \eqref{eq:framework-a0} is to apply integer programming. This is done by using binary-valued indicator variables to index into $\Om$. The details of this approach can be found in Appendix \ref{sec:alg-a0-ip}. 

We formulate a more efficient algorithm by focusing on the inner maximization in \eqref{eq:framework-a0}, as shown below. 
\begin{align}
    \min_{\RR} &\, \left\|\RR - \RRo\right\|_1 \label{eq:framework-a0-inner} \\
    \text{s.t.} 
    &\,\RR\left(a_1, \ad_2\right) < \RR\left(\ad_1, \ad_2\right) \quad \forall a_1 \neq \ad_1 \nonumber \\
    &\RR\left(\ad_1, a_2\right) > \RR\left(\ad_1, \ad_2\right) \quad \forall a_2 \neq \ad_2 \nonumber\\
    &\, \RR\left(\ac\right) \in \Om \quad \forall \ac \in \Ac \nonumber
\end{align}
We focus our attention on developing an efficient algorithm for solving \eqref{eq:framework-a0-inner}. Solving \eqref{eq:framework-a0} can be done by repeating our efficient algorithm to solve \eqref{eq:framework-a0-inner} for each $\Ad \in \Ac$, picking any solution with minimal cost. 

For convenience, we denote the sorted entries of $\Om$ by $\omega_1, \omega_2, \dots, \omega_m$, where $m = \card{\Om}$. If $\Om$ is not given in sorted order, then sorting these entries incurs a time complexity of $\BigOh{\card{\Om} \log \card{\Om}}$. We also define sentinels $\omega_0 = -\infty$ and $\omega_{m+1} = +\infty$. 

\newcommand{\Round}{\textsc{Round}}
Let us start with entries $\ac \in \Ac$ where $a_1 \neq \ad_1$ and $a_2 \neq \ad_2$. The only constraint on $\RR\left(\ac\right)$ is the $\Om$ constraint. We can efficiently pick $\RR\left(\ac\right)$ by finding an element $\omega_t \in \Om$ that minimizes $\left|\omega_t - \RRo\left(\ac\right)\right|$. To break ties, we arbitrarily choose to pick the smallest such $\omega_t$. We introduce a subroutine $\Round$ that performs this calculation using binary search over $\Om$. 
\begin{align}
    \RR\left(\ac\right) = \Round\left(\RRo\left(\ac\right), \Om\right)
    \label{eq:alg0-offdiag}
\end{align}
Intuitively, we approximate (or round) $\RRo\left(\ac\right)$ using values from $\Om$. 

The key decision that controls our choices for the remaining entries is how to set $\RR\left(\Ad\right)$. Suppose we choose $\RR\left(\Ad\right) = \omega_k$ for some $\omega_k \in \Om$. Consider some $\ac \in \Ac$ where $a_1 \neq \ad_1$ and $a_2 = \ad_2$, i.e., the row player does not follow $\Ad$ but the column player does. How should we pick $\RR\left(\ac\right)$? Ideally, to minimize the $L_1$ cost, we want to set $\RR\left(\ac\right)$ to $\omega_t = \Round\left(\RRo\left(\ac\right), \Om\right)$. However, we have the inequality constraint
\begin{align}
    \RR\left(\ac\right) < \RR\left(\Ad\right) = \omega_k
    \label{eq:alg0-ineq}
\end{align}
acting upon $\RR\left(\ac\right)$. If $\omega_t < \omega_k$, then no further action is necessary, but if $\omega_t \geq \omega_k$, then we cannot set $\RR\left(\ac\right)$ to $\omega_t$. In these cases, because $\omega_t$ minimizes $\left|\omega_t - \RRo\left(\ac\right)\right|$, we can guarantee that $\omega_{k-1}$ is the optimal choice for $\RR\left(\ac\right)$. As such, after fixing $\RR\left(\Ad\right) = \omega_k$, the optimal choice for $\RR\left(\ac\right)$ is
\begin{align}
    & \RR\left(\ac\right) = \min\left\{\omega_{k-1}, \omega_t\right\} &
    & \text{where $\omega_t = \Round\left(\RRo\left(\ac\right), \Om\right)$}.
    \label{eq:alg0-col}
\end{align}
The symmetric argument holds for any $\ac \in \Ac$ where $a_1 = \ad_1$ and $a_2 \neq \ad_2$, i.e., only the row player follows $\Ad$. If we choose $\RR\left(\Ad\right) = \omega_k \in \Om$, then the optimal choice for $\RR\left(\ac\right)$ is 
\begin{align}
    & \RR\left(\ac\right) = \max\left\{\omega_{k+1}, \omega_t\right\} & 
    \text{where $\omega_t = \Round\left(\RRo\left(\ac\right), \Om\right)$}.
    \label{eq:alg0-row}
\end{align}
With equations \eqref{eq:alg0-offdiag}, \eqref{eq:alg0-col}, and \eqref{eq:alg0-row}, we see why the choice of $\RR\left(\ad\right) = \omega_k \in \Om$ is important. Once our choice for $\omega_k$ is fixed, we have a closed form equations providing the optimal setting for the remaining entries of $\RR$. Algorithm~\ref{alg:alg0-slow} naturally follows. We iterate through every choice of $\omega_k$ and construct the corresponding optimal $\RR$. After every $\omega_k$ is considered, we pick whichever choice attained the optimal cost. 

\begin{algorithm}
    \caption{Initial algorithm for solving \eqref{eq:framework-a0-inner}}
    \label{alg:alg0-slow}
    \begin{algorithmic}
        \Require $\RRo$, $\Om$, $\ad$
        \ForAll{$\omega_k \in \Om$}
            \State $\RR\left(\Ad\right) \gets \omega_k$
            \ForAll{$\ac \in \Ac$}
                \If{$a_1 \neq \ad_1$ and $a_2 \neq \ad_2$}
                    \State $\RR\left(\ac\right) \gets \Round\left(\RRo\left(\ac\right), \Om\right)$
                \ElsIf{$a_1 \neq \ad_1$}
                    \State $\RR\left(\ac\right) \gets \min\left\{\omega_{k-1}, \Round\left(\RRo\left(\ac\right), \Om\right)\right\}$
                \ElsIf{$a_2 \neq \ad_2$}
                    \State $\RR\left(\ac\right) \gets \max\left\{\omega_{k+1}, \Round\left(\RRo\left(\ac\right), \Om\right)\right\}$
                \EndIf
            \EndFor
            \State $C_k \gets \left\|\RR - \RRo\right\|_1$
        \EndFor
        \State \Return $\RR$ 
    \end{algorithmic}
\end{algorithm}

This algorithm correctly solves \eqref{eq:framework-a0-inner}. We can build the optimal $\RR$ by extracting an $\omega_k$ that attained the optimal objective value, setting $\RR\left(\Ad\right) = \omega_k$, and reconstructing the remaining entries of $\RR$. However, this algorithm is quite slow. 

The primary overhead of this algorithm is reconstructing $\RR$ for every $\omega_k \in \Om$. We can optimize our algorithm by avoiding duplicate work. Consider entries $\ac \in \Ac$ where $a_1 \neq \ad_1$ and $a_2 \neq \ad_2$. Recall that $\RR\left(\ac\right)$ is not affected by our choice for $\RR\left(\Ad\right)$. This means that the calculations are the same for every $\omega_k \in \Om$. We optimize our algorithm by computing these entries once at the start of our algorithm. We need not consider these entries again. 

\newcommand{\SubA}{\alpha}
We continue by measuring the induced cost for setting every $\ac \in \Ac$ where $a_1 \neq \ad_1$ and $a_2 = \ad_2$, i.e., the row player does not follow $\Ad$ but the column player does. For convenience, we define $\SubA$ to be the set of all such action profiles. 
\begin{align}
    \SubA = \left\{\text{$\ac \in \Ac$ s.t. $a_1 \neq \ad_1$ and $a_2 = \ad_2$}\right\}
\end{align}
Ideally, we want to set each $\RR\left(\ac\right)$ to $\Round\left(\RRo\left(\ac\right), \Om\right)$, so we start by computing these values and measuring the induced $L_1$ cost $c$. We also keep some extra data to recall these calculations. For each $\omega_t \in \Om$, we record a list $R_t$ of all $\RRo\left(\ac\right)$ such that $\Round\left(\RRo\left(\ac\right), \Om\right) = \omega_t$. 
\begin{align}
    R_t = \left[\text{$\RRo\left(\ac\right)$ s.t. $\ac \in \SubA$ and $\Round\left(\RRo\left(\ac\right), \Om\right) = \omega_t$}\right]
\end{align}
Our strategy to avoid duplicate calculations is to maintain optimality while pursuing feasibility. Starting with this optimal configuration, we iterate through each $\omega_k \in \Om$. In each iteration, we adjust $\RR$ to attain feasibility for $\RR\left(\Ad\right) = \omega_k$. By ensuring optimality is maintained, we can efficiently compute the induced cost for each choice of $\omega_k$. 

We turn to satisfying \eqref{eq:alg0-ineq}. This inequality is less restrictive for larger choices of $\omega_k$. Our idea is to iterate through $\omega_k \in \Om$ in decreasing order, starting with the least restrictive $\omega_k = \omega_m$. To make our solution feasible, an entry $\RR\left(\ac\right)$ requires adjusting only if $\RR\left(\ac\right) = \omega_m$. These correspond to the elements of $R_m$. We manually adjust each $\RRo\left(\ac\right) \in R_m$ by setting $\RR\left(\ac\right) \gets \omega_{m-1}$ and recomputing their induced costs. Upon completion, we have an optimal configuration for $\omega_k = \omega_m$, so we record the optimal cost $c_k$. 

Continue with $\omega_k = \omega_{m-1}$. To make our solution feasible, an entry $\RR\left(\ac\right)$ requires adjusting if $\RR\left(\ac\right) = \omega_{m-1}$. As with the previous iteration, for each $\RRo\left(\ac\right) \in R_{m-1}$, we adjust $\RR\left(\ac\right) \gets \omega_{m-2}$ and recompute their induced costs. However, this iteration requires some more consideration. With our new choice of $\omega_k$, the entries $\RRo\left(\ac\right) \in R_m$ need to be readjusted. These entries were set to $\RR\left(\ac\right) = \omega_{m-1}$ in the previous iteration, so their adjustments will be identical. Rather than manually adjusting the cost for each of these entries, we can simply account for the total adjustment cost by setting
\begin{align}
    & c \gets c + \mu \cdot \left(\omega_k - \omega_{k-1}\right) &
    & \text{where $\mu$ is the size of $R_m$}.
\end{align}

\newcommand{\ComputeInducedCost}{\textsc{ComputeInducedCost}}
This pattern follows as we continue our process. For each $\omega_k \in \Om$, we manually adjust entries $\RRo\left(\ac\right) \in R_k$ and compute the changes in the induced cost. Then we count the number of entries that have been updated in previous iterations and efficiently adjust the cost using multiplication. Algorithm~\ref{alg:induced-cost} provides pseudocode for a subroutine $\ComputeInducedCost$ that follows this process. 

\begin{algorithm}
    \caption{Subroutine $\ComputeInducedCost$}
    \label{alg:induced-cost}
    \begin{algorithmic}
        \Procedure{ComputeInducedCost}{$\RRo$, $\Om$, $\SubA$}
            \ForAll{$\omega_t \in \Om$}
                \State $R_t \gets \left[\right]$
            \EndFor
            \State $c \gets 0$
            \ForAll{$\ac \in \SubA$}
                \State $\omega_t \gets \Round\left(\RRo\left(\ac\right), \Om\right)$
                \State Append $\RRo\left(\ac\right)$ to $R_t$
                \State $c \gets c + \left|\RRo\left(\ac\right) - \omega_t\right|$
            \EndFor
            \State $\mu \gets 0$
            \ForAll{$k \in \left[m, \dots, 3, 2, 1\right]$}
                \State $c \gets c + \mu \cdot \left(\omega_k - \omega_{k-1}\right)$
                \ForAll{$\RRo\left(\ac\right) \in R_k$}
                    \State $c \gets c - \left|\RRo\left(\ac\right) - \omega_k\right| + \left|\RRo\left(\ac\right) - \omega_{k-1}\right|$
                    \State $\mu \gets \mu + 1$
                \EndFor
                \State $c_k \gets c$
            \EndFor
            \State \Return $c_k$ for each $\omega_k \in \Om$
        \EndProcedure
    \end{algorithmic}
\end{algorithm}

\begin{lem}
	Subroutine $\ComputeInducedCost$ with inputs $\RRo$, $\Om$, and $\SubA$ runs with time complexity $\BigOh{\card{\Om} + \card{\SubA}\log\card{\Om}}$. 
\end{lem}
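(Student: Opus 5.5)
We prove the bound by a direct line-by-line accounting of Algorithm~\ref{alg:induced-cost}, splitting it into three phases. We assume the arithmetic operations, appending to a list, and reading or writing $\omega_k$ and $c_k$ all take constant time. We also assume, as in the discussion preceding Algorithm~\ref{alg:alg0-slow}, that $\Om$ is already sorted and stored in an array indexed by $k$. The sentinel $\omega_0=-\infty$ appears only when $k=1$; this is harmless because no entry can ever need to be pushed below $\omega_1$ in a feasible configuration, so we treat those accesses as constant-time as well.

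\emph{Initialization.} The first loop creates $m=\card{\Om}$ empty lists $R_t$, and setting $c \gets 0$ is one more step, so this phase costs $\BigOh{\card{\Om}}$.

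\emph{First pass over $\SubA$.} Each iteration makes one call to $\Round$, does one append, and performs one cost update. The call to $\Round$ is a binary search over the sorted array $\Om$. It finds the insertion position of $\RRo(\ac)$, compares the two neighbouring elements, and breaks ties toward the smaller one. This costs $\BigOh{\log\card{\Om}}$, and since the search returns the index $t$ directly, the append to $R_t$ takes constant time. This phase therefore costs $\BigOh{\card{\SubA}\log\card{\Om}}$.

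\emph{Sweep over $k=m,\dots,1$.} We use an aggregate counting argument rather than bounding each iteration separately. The outer loop runs $m$ times, and its body, apart from the inner loop, does constant work: the update $c \gets c+\mu(\omega_k-\omega_{k-1})$ and the assignment $c_k\gets c$. This contributes $\BigOh{\card{\Om}}$. For the inner loops, the key observation is that the lists $R_1,\dots,R_m$ partition the multiset $\{\RRo(\ac):\ac\in\SubA\}$, since each $\ac\in\SubA$ was appended to exactly one $R_t$. So across all outer iterations the inner body runs exactly $\sum_k \card{R_k}=\card{\SubA}$ times, each with constant work, for a total of $\BigOh{\card{\SubA}}$. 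Returning the values $c_k$ costs $\BigOh{\card{\Om}}$.

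Summing the three phases gives $\BigOh{\card{\Om}+\card{\SubA}\log\card{\Om}+\card{\SubA}}=\BigOh{\card{\Om}+\card{\SubA}\log\card{\Om}}$, as claimed.

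The main obstacle is the third phase. The naive bound on the nested loop is $m\cdot\max_k\card{R_k}$, which can be as large as $\card{\Om}\card{\SubA}$. The partition argument above is what avoids this, so the proof must state it explicitly.
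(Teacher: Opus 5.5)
Your proof is correct and follows the paper's own accounting. The paper likewise charges $\BigOh{\card{\alpha}\log\card{\Omega}}$ to the rounding pass and $\BigOh{\card{\Omega}}$ to the sweep over $k$, and it notes that each element of $\alpha$ is touched exactly once in the inner loops, for $\BigOh{\card{\alpha}}$ in total. Your explicit partition argument, and your stated assumption that $\Omega$ arrives sorted (the caller sorts it), just spell out what the paper's short proof leaves implicit.
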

\begin{proof}
    We examine the pseudocode of Subroutine $\ComputeInducedCost$. The first overhead is the initial rounding phase. The subsequent iteration considers each element of $\Om$ exactly once. Moreover, each $\ac \in \SubA$ is considered exactly once. This puts the total time complexity at
    \begin{align*}
        \BigOh{\card{\SubA}\log\card{\Om}}
        + \BigOh{\card{\Om}}
        + \BigOh{\card{\SubA}}
        = \BigOh{\card{\Om} + \card{\SubA}\log\card{\Om}}
    \end{align*}
\end{proof}

\newcommand{\SubB}{\beta}
By symmetry, we can use subroutine $\ComputeInducedCost$ to compute the induced cost for setting every $\ac \in \Ac$ where $a_1 = \ad_1$ and $a_2 \neq \ad_2$, i.e., the row player does follows $\Ad$ but the column player does not. For convenience, we define
\begin{align}
    \SubB = \left\{\text{$\ac \in \Ac$ s.t. $a_1 = \ad_1$ and $a_2 \neq \ad_2$}\right\}
\end{align}
Our subroutine is designed to optimize \eqref{eq:alg0-col}, but we can adapt it for \eqref{eq:alg0-row} by negating the entries in $\RRo$ and $\Om$. With every profile $\ac \in \SubA \cup \SubB$ considered, the only remaining entry is $\Ad$. The cost for setting this single entry is trivially
\begin{align}
    \left|\RRo\left(\Ad\right) - \omega_k\right|
\end{align}
for each $\omega_k \in \Om$. With the cost of every entry considered, we construct Algorithm~\ref{alg:alg0}. 

\begin{algorithm}
    \caption{Algorithm $\Azdp$}
    \label{alg:alg0}
    \begin{algorithmic}
        \Procedure{$\Azdp$}{$\RRo$, $\Om$, $\Ad$}
            \State Sort $\Om$.
            \ForAll{$\omega_k \in \Om$}
                \State $C_k \gets \left|\RRo\left(\Ad\right) - \omega_k\right|$
            \EndFor
            \State $\SubA \gets \left\{\text{$\ac \in \Ac$ s.t. $a_1 \neq \ad_1$ and $a_2 = \ad_2$}\right\}$
            \ForAll{$c_k \in \ComputeInducedCost\left(\RRo, \Om, \SubA\right)$}
                \State $C_k \gets C_k + c_k$
            \EndFor
            \State $\SubB \gets \left\{\text{$\ac \in \Ac$ s.t. $a_1 = \ad_1$ and $a_2 \neq \ad_2$}\right\}$
            \ForAll{$c_k \in \ComputeInducedCost\left(-\RRo, -\Om, \SubB\right)$}
                \State $C_k \gets C_k + c_k$
            \EndFor
            \State $k \gets \arg \min_k C_k$
            \State $\RR\left(\Ad\right) \gets \omega_k$
            \ForAll{$\ac \in \Ac$}
                \If{$a_1 \neq \ad_1$ and $a_2 \neq \ad_2$}
                    \State $\RR\left(\ac\right) \gets \Round\left(\RRo\left(\ac\right), \Om\right)$
                \ElsIf{$a_1 \neq \ad_1$}
                    \State $\RR\left(\ac\right) \gets \min\left\{\omega_{k-1}, \Round\left(\RRo\left(\ac\right), \Om\right)\right\}$
                \ElsIf{$a_2 \neq \ad_2$}
                    \State $\RR\left(\ac\right) \gets \max\left\{\omega_{k+1}, \Round\left(\RRo\left(\ac\right), \Om\right)\right\}$
                \EndIf
            \EndFor
            \State \Return $\RR$
        \EndProcedure
    \end{algorithmic}
\end{algorithm}

\begin{thm}
	Algorithm $\Azdp$ correctly solves \eqref{eq:framework-a0-inner} and runs with time complexity
	\begin{align*}
        \BigOh{\left(\card{\Om} + \card{\Ac}\right) \cdot \log\card{\Om}}. 
	\end{align*}
	Moreover, the optimization problem \eqref{eq:framework-a0} can be solved with a time complexity of
	\begin{align*}
		\BigOh{\left(\card{\Om} + \card{\Ac}\right) \cdot \card{\Ac}\log\card{\Om}}. 
	\end{align*}
    \label{thm:alg0}
\end{thm}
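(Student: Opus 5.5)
Correctness and running time are handled separately. For correctness, the problem \eqref{eq:framework-a0-inner} splits into a decomposition indexed by the value $\RR\left(\Ad\right)=\omega_k$. Once $k$ is fixed, the objective $\left\|\RR-\RRo\right\|_1$ is a sum of independent terms, one per profile $\ac$. Each constraint involves only a single entry $\RR\left(\ac\right)$ together with the fixed value $\omega_k$. The problem therefore separates into one-dimensional problems.

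\textbf{Off-diagonal entries.} For $a_1\neq\ad_1,a_2\neq\ad_2$ the only constraint is membership in $\Om$, so $\Round$ is optimal by definition.

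\textbf{Entries in $\SubA$.} For $\ac\in\SubA$ we minimize $\left|\omega-\RRo\left(\ac\right)\right|$ over $\omega\in\Om$ with $\omega\le\omega_{k-1}$. The key fact is that $\omega\mapsto\left|\omega-x\right|$ is convex, hence unimodal on the sorted set $\Om$ with a minimizer at $\omega_t=\Round(x,\Om)$. If $\omega_t\le\omega_{k-1}$ it is feasible and optimal. Otherwise the cost is nondecreasing as we move left from $\omega_t$, so the largest feasible value $\omega_{k-1}$ is optimal. This justifies \eqref{eq:alg0-col}, and \eqref{eq:alg0-row} follows by negation.

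\textbf{Infeasible choices of $k$.} When $k=1$ or $k=m$, the sentinel $\omega_0=-\infty$ or $\omega_{m+1}=+\infty$ makes $C_k=\infty$ whenever $\SubA$ or $\SubB$ is nonempty. This happens automatically, since each player has at least two actions. Such $k$ are never selected, and Proposition~\ref{prop:azex} guarantees that some finite $C_k$ exists when $\card{\Om}\ge3$. Minimizing over $k$ then gives the global optimum.

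\textbf{Correctness of $\ComputeInducedCost$ (main obstacle).} The remaining correctness step is showing that the $c_k$ returned by $\ComputeInducedCost$ equals $\sum_{\ac\in\SubA}\left|\min\{\omega_{k-1},\Round(\RRo(\ac),\Om)\}-\RRo(\ac)\right|$. I would prove this by downward induction on $k$, with the invariant that after processing index $k$:
\begin{itemize}
\item $\mu=\sum_{t\ge k}\card{R_t}$;
\item $c$ equals the cost of the configuration in which entries with $t<k$ sit at $\omega_t$ and entries with $t\ge k$ sit at $\omega_{k-1}$.
\end{itemize}

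For the inductive step from $k+1$ to $k$, the $\mu$ entries previously at $\omega_k$ move to $\omega_{k-1}$. I expect this to be the delicate point. Each such entry has $\Round$-value at least $\omega_{k+1}$, so its original $\RRo(\ac)$ is at least the midpoint of $\omega_k$ and $\omega_{k+1}$, hence $\ge\omega_k$. Moving it from $\omega_k$ down to $\omega_{k-1}$ therefore increases its cost by exactly $\omega_k-\omega_{k-1}$. This is precisely the multiplicative update in the subroutine, modulo verifying the index convention in the pseudocode. The entries of $R_k$ are then explicitly moved from $\omega_k$ to $\omega_{k-1}$ by the inner loop, which maintains the invariant. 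I would double-check the boundary tie-breaking of $\Round$; it only needs $\RRo(\ac)\ge\omega_k$, which holds even at ties.

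\textbf{Running time.}
\begin{itemize}
\item Sorting $\Om$ costs $\BigOh{\card{\Om}\log\card{\Om}}$.
\item The two calls to $\ComputeInducedCost$ cost $\BigOh{\card{\Om}+(\card{\SubA}+\card{\SubB})\log\card{\Om}}$ by the preceding lemma.
\item Computing all $C_k$ and the argmin costs $\BigOh{\card{\Om}}$.
\item Reconstruction performs one $\Round$ per profile, costing $\BigOh{\card{\Ac}\log\card{\Om}}$.
\end{itemize}
Since $\card{\SubA}+\card{\SubB}\le\card{\Ac}$, the total is $\BigOh{(\card{\Om}+\card{\Ac})\log\card{\Om}}$.

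For \eqref{eq:framework-a0}, sort $\Om$ once and then run the procedure for each of the $\card{\Ac}$ choices of $\Ad$. Adding $\Gval(\Ad)$ and taking the best candidate multiplies the bound by $\card{\Ac}$, which gives the second claim.
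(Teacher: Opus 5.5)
Your proposal is correct and follows essentially the paper's route: the same decomposition by the value $\RR\left(\Ad\right)=\omega_k$ with the closed forms \eqref{eq:alg0-offdiag}, \eqref{eq:alg0-col}, \eqref{eq:alg0-row}, and the same line-by-line runtime count multiplied by $\card{\Ac}$ for the outer enumeration. Your unimodality argument and downward-induction invariant for \textsc{ComputeInducedCost} make rigorous what the paper only sketches before the theorem, since its proof covers just the time bound. One bookkeeping detail should be stated explicitly: the call on $\left(-\RRo,-\Om,\beta\right)$ is indexed by the sorted order of $-\Om$, so the cost it returns for its $j$-th smallest level $-\omega_{m+1-j}$ must be added to $C_{m+1-j}$, not to $C_j$.
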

\begin{proof}
    We walk through the pseudocode for Algorithm $\Azdp$. Notable workloads are sorting $\Om$, measuring costs using Subroutine $\ComputeInducedCost$, and constructing the final reward matrix. The total time complexity comes to
    \begin{align*}
        & \BigOh{\card{\Om}\log\card{\Om}}
        + \BigOh{\card{\Om} + \card{\Ac_1}\log\card{\Om} + \card{\Ac_2}\log\card{\Om}}
        + \BigOh{\card{\Ac}\log\card{\Om}} \\
        &= \BigOh{\left(\card{\Om} + \card{\Ac_1} + \card{\Ac_2} + \card{\Ac}\right) \cdot \log\card{\Om}}
        = \BigOh{\left(\card{\Om} + \card{\Ac}\right) \cdot \log\card{\Om}}
    \end{align*}
    We solve \eqref{eq:framework-a0} by running Algorithm $\Azdp$ for each $\Ad \in \Ac$ and picking the result with minimal cost, so the total time required to solve our problem is $\card{\Ac}$ times the time complexity of Algorithm $\Azdp$. 
\end{proof}

\section{Game Changer on General-sum Games}

We continue with general-sum games, permitting an arbitrary number of players. Here, we drop the zero-sum requirement on the output reward matrices (technically, it is a reward tensor/function). Note that a zero-sum game may be given as $\RRo$, but we will treat it as a general-sum game. Our strategy is to iterate through every pure strategy $\Ad \in \Ac$ and to install each $\Ad$ as a strict dominant strategy equilibrium. 

\begin{df} [Dominant Strategy Equilibrium] \label{def:sne} 
    Let $\RR$ be the reward function of a normal form game with action set $\Ac = \Ac_1 \times \dots \times \Ac_n$. A pure joint strategy $\Ad \in \Ac$ is a strict dominant strategy equilibrium ($\sDSE$) if for every player $i \in \ints{n}$ and every action $a_i \in \Ac_i$, we have
    \begin{align}
        a_i \neq \ad_i \implies \RR_i\left(a_i, a_{-i}\right) < \RR_i\left(\ad_i, a_{-i}\right), \forall a_{-i} \in \Ac_{-i}
        \nonumber.
    \end{align}
\end{df}

Mathematically, we are solving the following minimization equation. 
\begin{align}
    \max_{\Ad} \max_{\RR} &\, \Gval\left(\Ad\right) - \left\|\RR - \RRo\right\|_1 \label{eq:framework-a1} \\
    \text{s.t.} 
    &\, \RR_i\left(a_i, a_{-i}\right) < \RR_i\left(\ad_i, a_{-i}\right) \quad \forall i \in \ints{n}, \forall a_i \neq \ad_i, \forall a_{-i} \in \Ac_{-i} \nonumber\\
    &\, \RR\left(\ac\right) \in \Om \quad \forall \ac \in \Ac \nonumber
\end{align}

This approach shares several advantages with our zero-sum case. The only additional constraints are linear inequalities, allowing us to directly apply integer programming. We discuss an efficient closed-form algorithm in Section~\ref{sec:alg-a1}. We also have a similar sufficient and necessary condition for the existence of a feasible solution to \eqref{eq:framework-a1}. Again, we assume that each player has at least $2$ actions since degenerate one-action cases are trivial.

\begin{prop} [$\Ao$ Feasibility] \label{prop:aoex} 
    \eqref{eq:framework-a1} has a feasible solution if and only if $\left| \Om \right| \geq 2$.
\end{prop}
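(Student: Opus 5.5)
We prove the two directions separately, mirroring the proof of Proposition~\ref{prop:azex}. For necessity, we fix any candidate $\Ad \in \Ac$ and any player $i \in \ints{n}$. Since we assume every player has at least two actions, there is some $a_i \in \Ac_i$ with $a_i \neq \ad_i$. Taking any $a_{-i} \in \Ac_{-i}$ (for instance $\Ad_{-i}$), the $\sDSE$ constraint in \eqref{eq:framework-a1} requires
\begin{align*}
\RR_i\left(a_i, a_{-i}\right) < \RR_i\left(\ad_i, a_{-i}\right).
\end{align*}
Both entries must lie in $\Om$, so $\Om$ contains two distinct values, i.e. $\card{\Om} \geq 2$. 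If $\card{\Om} \leq 1$, every entry of $\RR$ is forced to be equal (or no assignment exists at all), and the constraint fails for every $\Ad$.

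For sufficiency, suppose $\omega_1 < \omega_2$ are two elements of $\Om$, and fix any target $\Ad$. We define $\RR$ player by player, with no coupling between players because the general-sum setting drops the zero-sum requirement:
\begin{align*}
\RR_i\left(\ac\right) = \omega_2 \text{ if } a_i = \ad_i, \qquad \RR_i\left(\ac\right) = \omega_1 \text{ if } a_i \neq \ad_i.
\end{align*}
Every entry lies in $\Om$. For any $i$, any $a_i \neq \ad_i$ and any $a_{-i}$, we have $\RR_i\left(a_i, a_{-i}\right) = \omega_1 < \omega_2 = \RR_i\left(\ad_i, a_{-i}\right)$, so $\Ad$ is a strict dominant strategy equilibrium and $\RR$ is feasible for \eqref{eq:framework-a1}.

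There is no real obstacle. The only point needing care is noting why two values suffice here while the zero-sum case in Proposition~\ref{prop:azex} needs three. In the general-sum setting each player's payoff is an independent function, and every constraint compares only two entries of that same player's payoff. There is therefore no chain $\RR(a_1,\ad_2) < \RR(\Ad) < \RR(\ad_1,a_2)$ forcing a third value. The argument also uses the standing assumption that each player has at least two actions; without it the strict constraints are vacuous and the necessity direction fails.
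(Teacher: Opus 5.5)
Your proof is correct and takes the same approach as the paper. For sufficiency you use the identical construction: $\RR_i(\ac) = \omega_2$ when $a_i = \ad_i$ and $\omega_1$ otherwise. The paper leaves necessity implicit, and your argument fills it in correctly: a single constraint $\RR_i(a_i, a_{-i}) < \RR_i(\ad_i, a_{-i})$ already forces two distinct values in $\Om$.
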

\begin{proof}
    We follow a similar proof by construction. Take $\omega_1 < \omega_2 \in \Om$ and pick any $\Ad \in \Ac$. Construct the following reward matrix. 
    \begin{align}
        & \forall \ac \in \Ac \quad \forall i \in \ints{n} &
        & \RR_i\left(\ac\right) = \begin{cases}
            \omega_2 & \text{if $a_i = \ad_i$} \\
            \omega_1 & \text{otherwise}
        \end{cases}
        \nonumber
    \end{align}
\end{proof}

The strength advantage for our general sum case requires extra consideration. Without the zero-sum restriction, it is difficult to determine whether a reward matrix has a unique Nash equilibrium. We choose to install a strict dominant strategy equilibrium because it guarantees control over players seeking a wide variety of solution concepts beyond those that seek Nash equilibria. Notably, the game changer may control players that follow Dominant Strategy Equilibria (DSE), weak Dominant Strategy Equilibria (wDSE), Nash equilibria, correlated equilibria, and coarse correlated equilibria. However, the dominant strategy equilibrium condition is not necessary for these equilibria to be unique.

\begin{lem} [$\Ao$ Implications for Game Changer Feasibility and Optimality] \label{prop:aofs} 
    If $\RR$ is a feasible solution to ~\eqref{eq:framework-a1}, then $\RR$ is a feasible solution to ~\eqref{eq:fpdgc} for $\SOL \in \left\{\sDSE, \DSE, \wDSE, \NE, \CE, \CCE\right\}$. Furthermore, if $\RR$ is an optimal solution to ~\eqref{eq:framework-a1}, then $\RR$ is an optimal solution to ~\eqref{eq:fpdgc} for $\SOL = \sDSE$, but not necessarily the other solution concepts.
\end{lem}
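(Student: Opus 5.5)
We split the statement into a feasibility part and an optimality part. For feasibility, fix a feasible $\RR$ of \eqref{eq:framework-a1}; then $\Ad$ is an $\sDSE$ of $\RR$ and all entries lie in $\Om$. We show that $\SOL(\RR)=\{\Ad\}$ for each listed concept.

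\textbf{Dominant-strategy concepts.} For $\sDSE$, $\DSE$ and $\wDSE$, uniqueness follows from strictness. Suppose some other profile $\ac\neq\Ad$ were weakly dominant. Pick a player $i$ with $a_i\neq\ad_i$. Weak dominance of $a_i$ requires $\RR_i(a_i,a_{-i})\ge\RR_i(\ad_i,a_{-i})$ for all $a_{-i}$. This contradicts the strict inequality in \eqref{eq:framework-a1}. Hence $\Ad$ is the only such equilibrium, and it is one under each of the three concepts, since strict dominance implies weak dominance.

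\textbf{Distributional concepts.} Since $\NE\subseteq\CE\subseteq\CCE$ (viewing mixed NE as product distributions), it suffices to show that every CCE is the point mass on $\Ad$; uniqueness of NE and CE then follows. Take any CCE $\sigma$ and any player $i$. The CCE condition for the fixed deviation to $\ad_i$ gives
\[
\mathbb{E}_{\ac\sim\sigma}\left[\RR_i(\ad_i,a_{-i})-\RR_i(a_i,a_{-i})\right]\le 0 .
\]
Each term in the expectation is $\ge 0$, and it is $>0$ whenever $a_i\neq\ad_i$ by the $\sDSE$ inequalities. So $\sigma$ puts zero mass on profiles with $a_i\neq\ad_i$. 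Doing this for every $i$ gives $\sigma=\delta_{\Ad}$. Conversely, $\delta_{\Ad}$ is a pure NE, hence a CE and a CCE, so each solution set is exactly $\{\Ad\}$.

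\textbf{Optimality for $\SOL=\sDSE$.} The constraint $\SOL(\RR)=\{\Ad\}$ in \eqref{eq:fpdgc} with $\SOL=\sDSE$ implies that $\Ad$ is an $\sDSE$. This is exactly the constraint of \eqref{eq:framework-a1}. Conversely, by the first paragraph every feasible point of \eqref{eq:framework-a1} has $\sDSE(\RR)=\{\Ad\}$. The two feasible sets therefore coincide (for each $\Ad$, hence overall), and the objectives are identical, so the optimizers coincide.

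\textbf{Non-optimality for the other concepts.} This is the main obstacle: we need a concrete counterexample, and it has to survive the $\Om$-constraint. We would try a $2\times2$ game with $\Om=\{0,1\}$ and $V\equiv 0$, where $\RRo$ already has a unique pure NE at $\Ad$ without $\Ad$ being strictly dominant. For instance, take
\[
\RRo_1=\begin{bmatrix}1&0\\0&0\end{bmatrix}, \qquad \RRo_2=\begin{bmatrix}1&0\\0&1\end{bmatrix}.
\]
If row 1 strictly dominates for player 1, the best response of player 2 is column 1, so $(1,1)$ is the unique equilibrium under $\NE$, $\CE$, $\CCE$ and $\wDSE$. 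We would adjust the entries until that claim checks out. Such an $\RRo$ attains cost $0$ for those concepts, while any $\sDSE$ installation must change at least one entry (here $\RR_2(2,2)$), giving cost $\ge1>0$. Hence the optimum of \eqref{eq:framework-a1} is strictly suboptimal in \eqref{eq:fpdgc} for those concepts. For $\DSE$, we would pick the example so that a weak-but-not-strict dominant profile exists at zero cost.
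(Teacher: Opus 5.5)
Your feasibility part and your $\sDSE$-optimality part are correct, and more complete than the paper's one-line justification. The deviation-to-$\ad_i$ argument, which shows that every $\CCE$ is the point mass on $\Ad$, settles $\NE$, $\CE$ and $\CCE$ at once. The gap is in the last part. It is the only place where the statement needs an explicit construction, and the construction you give does not work. In your $\RRo$, row 1 does not strictly dominate for player 1, since $\RRo_1(1,2)=\RRo_1(2,2)=0$. Player 2 has no dominant action: column 1 is better against row 1, and column 2 is better against row 2. This has three consequences:
\begin{itemize}
\item $(2,2)$ is also a pure $\NE$, because player 1 is indifferent and player 2 strictly prefers column 2 against row 2.
\item Player 1 putting probability at most $1/2$ on row 1 against column 2 gives a continuum of mixed $\NE$.
\item There is no $\wDSE$ or $\DSE$ at all.
\end{itemize}
So $\RRo$ is infeasible in \eqref{eq:fpdgc} for every listed concept, and the cost-$0$ comparison does not go through. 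Your intended repair (making row 1 strictly dominant, e.g.\ $\RRo_1=\begin{bmatrix}1&1\\0&0\end{bmatrix}$) would still only handle $\NE$, $\CE$ and $\CCE$. Player 2 would still have no weakly dominant action, so the $\DSE$ and $\wDSE$ solution sets would remain empty.

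The design that covers all five concepts at once is the reverse one, and it is what the paper's example $\begin{bmatrix}(1,1)&(0,0)\\(0,3)&(0,2)\end{bmatrix}$ does. Give one player a strictly dominant action, and make the other player's best response to it weakly but not strictly dominant. Within $\Om=\{0,1\}$ you can take $\RRo_1=\begin{bmatrix}1&0\\0&0\end{bmatrix}$, $\RRo_2=\begin{bmatrix}1&0\\1&0\end{bmatrix}$ and $V\equiv 0$.
\begin{itemize}
\item Column 1 is strictly dominant, so your own $\CCE$ argument puts all mass on column 1. Player 1's deviation to row 1 then puts all mass on $(1,1)$.
\item $(1,1)$ is also the unique weakly dominant profile.
\item Hence $\RR=\RRo$ attains value $0$ in \eqref{eq:fpdgc} for $\SOL\in\{\DSE,\wDSE,\NE,\CE,\CCE\}$, reading $\DSE$ with a weak inequality as you do.
\item No profile of $\RRo$ is an $\sDSE$, because of the tie in column 2. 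So every feasible point of \eqref{eq:framework-a1} changes some entry. Since $\RRo$ already takes values in $\Om$, that costs at least $1$.
\end{itemize}
An optimum of \eqref{eq:framework-a1} is therefore strictly suboptimal for those concepts, which is the claim you needed.
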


\begin{proof}
A unique sDSE implies that the action profile is also a unique DSE, wDSE, NE, CE, and CCE; but the converse is not necessarily true, for example, $\RR = \begin{bmatrix} (1, 1) & (0, 0) \\ (0, 3) & (0, 2) \end{bmatrix}$ has a unique NE that is not a sDSE.
\end{proof}

\subsection{Algorithm $\Aodp$} \label{sec:alg-a1}

As with the zero-sum case, one approach to solving \eqref{eq:framework-a1} is to apply integer programming. This is done by using binary-valued indicator variables to index into $\Om$. The details of this approach can be found in Appendix \ref{sec:alg-a1-ip}. 

We formulate a more efficient algorithm by focusing on the inner maximization in \eqref{eq:framework-a1}, as shown below. 
\begin{align}
    \min_{\RR} &\, \left\|\RR - \RRo\right\|_1 \label{eq:framework-a1-inner} \\
    \text{s.t.} 
    &\, a_i \neq \ad_i \implies \RR_i\left(a_i, a_{-i}\right) < \RR_i\left(\ad_i, a_{-i}\right) \quad \forall i \in \ints{n} \quad \forall a_i \in \Ac_i \\
    &\, \RR\left(\ac\right) \in \Om \quad \forall \ac \in \Ac \nonumber
\end{align}
We focus our attention on developing an efficient algorithm for solving \eqref{eq:framework-a1-inner}. Solving \eqref{eq:framework-a1} can be solved by repeating our efficient algorithm to solve \eqref{eq:framework-a1-inner} for each $\Ad \in \Ac$. 

Our efficient algorithm directly applies the $\ComputeInducedCost$ subroutine developed in Section~\ref{sec:alg-a0}. The structure of \eqref{eq:framework-a1-inner} implies strong separability. Namely, we can identify a separate subproblem for each pair of $i \in \ints{n}$ and $a_{-i} \in \Ac_{-i}$. This subproblem is exactly what $\ComputeInducedCost$ was designed to solve. We construct Algorithm~\ref{alg:alg1}. 

\begin{algorithm}
    \caption{Algorithm $\Aodp$}
    \label{alg:alg1}
    \begin{algorithmic}
        \Procedure{$\Aodp$}{$\RRo$, $\Om$, $\Ad$}
            \State Sort $\Om$.
            \ForAll{$i \in \ints{n}$}
                \ForAll{$a_{-i} \in \Ac_{-i}$}
                    \State $\SubA \gets \left\{\text{$\left(a_i, a_{-i}\right)$ s.t. $a_i \in \Ac_i$ and $a_i \neq \ad_i$}\right\}$
                    \ForAll{$c_k \in \ComputeInducedCost\left(\RRo_i, \Om, \SubA\right)$}
                        \State $C_k \gets \left|\RRo_i\left(\ad_i, a_{-i}\right) - \omega_k\right| + c_k$
                    \EndFor
                    \State $k \gets \arg\min_k C_k$
                    \State $\RR_i\left(\ad_i, a_{-i}\right) \gets \omega_k$
                    \ForAll{$a_i \in \Ac_i$}
                        \State $\RR_i\left(a_i, a_{-i}\right) \gets \min\left\{\omega_{k-1}, \Round\left(\RRo_i\left(a_i, a_{-i}\right), \Om\right)\right\}$
                    \EndFor
                \EndFor
            \EndFor
            \State \Return $\RR$
        \EndProcedure
    \end{algorithmic}
\end{algorithm}

We note that the time complexity of this algorithm is exponential in the number of players. However, this exponential nature is unavoidable because the action space has an exponential number of elements, and the reward function is $n$ times larger than the action space. 

\begin{thm}
	Algorithm $\Aodp$ correctly solves \eqref{eq:framework-a1-inner} and runs with time complexity
	\begin{align*}
		\BigOh{\left(\card{\Om} + n \card{\Ac}\right) \log\card{\Om} + \sum_{i=1}^n \card{\Ac_{-i}} \card{\Om}}. 
	\end{align*}
    Moreover, the optimization problem \eqref{eq:framework-a1} can be solved with a time complexity of
	\begin{align*}
		\BigOh{\card{\Ac} \cdot\left(\left(\card{\Om} + n \card{\Ac}\right) \log\card{\Om} + \sum_{i=1}^n \card{\Ac_{-i}} \card{\Om}\right)}. 
	\end{align*}
\end{thm}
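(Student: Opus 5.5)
The proof has two parts, correctness and running time, and I would handle them in that order.

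\textbf{Separability.} First I would show that \eqref{eq:framework-a1-inner} splits into independent blocks. Each constraint of the form $\RR_i\left(a_i, a_{-i}\right) < \RR_i\left(\ad_i, a_{-i}\right)$ involves only entries of player $i$'s reward with the same $a_{-i}$. The $L_1$ objective is a sum over entries $\left(i, \ac\right)$. So the variables partition into blocks
\[
B_{i,a_{-i}} = \left\{\RR_i\left(a_i, a_{-i}\right) : a_i \in \Ac_i\right\},
\]
and both the objective and the feasible set factor over these blocks. Minimizing the total cost is therefore equivalent to minimizing each block independently.

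\textbf{Correctness within a block.} Each block is exactly the one-sided problem treated in Section~\ref{sec:alg-a0}. There is a single pivot entry $\RR_i\left(\ad_i, a_{-i}\right) = \omega_k$, and every other entry in the block (the set $\SubA$) must be strictly below it. As argued for \eqref{eq:alg0-col}, once $\omega_k$ is fixed the optimal value of each non-pivot entry is $\min\left\{\omega_{k-1}, \Round\left(\RRo_i\left(\ac\right), \Om\right)\right\}$. This uses convexity of $\left|\cdot - \RRo\right|$ restricted to the sorted $\Om$: the cost is unimodal along $\Om$ with its minimum at the rounded value, so when the rounded value is forbidden, the closest allowed value is the largest element below $\omega_k$.

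Next I would argue that $\ComputeInducedCost$ returns exactly these costs $c_k$ for every $k$. The invariant, maintained while scanning $k = m, \dots, 1$, is that after processing $k$ the running total $c$ equals
\[
\sum_{\ac \in \SubA} \left|\RRo\left(\ac\right) - \min\left\{\omega_{k-1}, \Round\left(\RRo\left(\ac\right), \Om\right)\right\}\right|.
\]
The invariant is preserved because:
\begin{itemize}
\item entries already capped at $\omega_k$ (there are $\mu$ of them) each move down by exactly $\omega_k - \omega_{k-1}$;
\item the rounded value of each such entry was at least $\omega_k$, which is at least its original value $\RRo$, so the absolute value increases linearly by that amount;
\item entries of $R_k$ are corrected individually.
\end{itemize}

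The $k=1$ case, with sentinel $\omega_0 = -\infty$, makes the cost infinite. That correctly encodes infeasibility, provided we note it explicitly. Adding the pivot cost $\left|\RRo_i\left(\ad_i, a_{-i}\right) - \omega_k\right|$ and taking the $\arg\min$ over $k$ therefore gives the block optimum. Reconstruction then yields a feasible optimal block. For the pivot index $a_i = \ad_i$, the loop's assignment must be read as excluding the pivot, and I would remark on this. Combining all blocks gives an optimal solution of \eqref{eq:framework-a1-inner}.

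\textbf{Running time.} The time bound is a sum over blocks. Sorting costs $\BigOh{\card{\Om}\log\card{\Om}}$. For each $i$ and $a_{-i}$, the preceding lemma gives $\BigOh{\card{\Om} + \card{\Ac_i}\log\card{\Om}}$ for $\ComputeInducedCost$. The $\arg\min$ over the $C_k$ costs $\BigOh{\card{\Om}}$, and reconstruction costs $\BigOh{\card{\Ac_i}\log\card{\Om}}$. Summing over $a_{-i}$ gives
\[
\card{\Ac_{-i}}\card{\Om} + \card{\Ac}\log\card{\Om}
\]
for player $i$, and summing over $i$ gives the stated bound.

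The outer problem \eqref{eq:framework-a1} is solved by enumerating all $\Ad \in \Ac$, computing $\Gval\left(\Ad\right)$ minus the inner cost for each, and taking the best. This multiplies the bound by $\card{\Ac}$.

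\textbf{Main obstacle.} I expect the main difficulty to be stating the $\ComputeInducedCost$ invariant cleanly, including the linear-increase claim for previously capped entries. I would argue it from $\RRo\left(\ac\right) \le \omega_k$ failing being impossible: if an entry was previously capped, its rounded value is at least $\omega_k$, and any element of $\Om$ below its rounded value lies on the same side of $\RRo$.
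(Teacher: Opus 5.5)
Your route is the paper's. The paper's proof only tallies running time: sort $\Om$, make one call of \textsc{ComputeInducedCost} plus rounding for each pair $(i, a_{-i})$, sum over pairs, then multiply by $\card{\Ac}$ for the enumeration over $\Ad$. Correctness rests on the separability remark before Algorithm~\ref{alg:alg1} and the argument for \eqref{eq:alg0-col}. Your block decomposition, the reduction to the one-sided problem, the runtime count, and your remark that the reconstruction loop must skip $a_i = \ad_i$ (as written it overwrites the pivot) all agree with this and make explicit what the paper leaves implicit.

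However, the step you yourself flag as the main obstacle is justified with the inequality reversed. Take an entry already capped at $\omega_k$, i.e.\ with rounded index $t > k$. The bulk update $c \gets c + \mu\,(\omega_k - \omega_{k-1})$ is exact only if $\omega_k \le \RRo(\ac)$. You assert instead that this value is at least $\RRo(\ac)$, and your last paragraph argues that $\RRo(\ac) \le \omega_k$ cannot fail. If $\RRo(\ac) < \omega_k$, moving the entry from $\omega_k$ to $\omega_{k-1}$ changes its cost by strictly less than $\omega_k - \omega_{k-1}$. It even decreases the cost when $\RRo(\ac) \le \omega_{k-1}$. So the invariant does not follow from what you wrote.

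The correct fact is this. If $\omega_t = \textsc{Round}(\RRo(\ac), \Om)$, then every $\omega_j$ with $j < t$ satisfies $\omega_j \le \RRo(\ac)$. Otherwise $\RRo(\ac) < \omega_j < \omega_t$, and $\omega_j$ would be strictly closer to $\RRo(\ac)$ than $\omega_t$. Hence $\omega_{k-1} < \omega_k \le \RRo(\ac)$, and
\[
\left|\RRo(\ac) - \omega_{k-1}\right| - \left|\RRo(\ac) - \omega_k\right| = \omega_k - \omega_{k-1}.
\]
Your closing remark that elements below the rounded value lie on one side of $\RRo(\ac)$ is the right idea; the side is below. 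The same fact gives the monotonicity behind your choice $\min\{\omega_{k-1}, \textsc{Round}(\RRo(\ac), \Om)\}$. With this correction the argument is complete.
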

\begin{proof}
    We start with sorting $\Om$. Then for each player $i \in \ints{n}$ and each $a_{-i} \in \Ac_{-i}$, our primary overheads are running Subroutine $\ComputeInducedCost$ and rounding entries. Putting these together and taking the summation over $\ints{n}$ and $\Ac_{-i}$, we get
    \begin{align*}
        & \BigOh{\card{\Om}\log\card{\Om}}
        + \BigOh{\sum_{i=1}^n \card{\Ac_{-i}} \cdot \left(\card{\Om} + \card{\Ac_i}\log\card{\Om}\right)} \\
        &= \BigOh{\card{\Om}\log\card{\Om} +
        \sum_{i=1}^n \card{\Ac_{-i}} \cdot\card{\Om} + \sum_{i=1}^n \card{\Ac}\log\card{\Om}} \\
        &= \BigOh{\left(\card{\Om} + n \card{\Ac}\right)\log\card{\Om} + \sum_{i=1}^n \card{\Ac_{-i}} \cdot\card{\Om}}.
    \end{align*}
    We solve \eqref{eq:framework-a1} by repeating Algorithm $\Aodp$ for each $\ac \in \Ac$, so the total time required is $\card{\Ac}$ times the time complexity of Algorithm $\Aodp$. 
\end{proof}

\section{Experiments}

To provide extra evidence for the veracity of our algorithms, we perform empirical experiments. Our first experiment shows the correctness of our efficient approach. We generate several randomized games with sizes
\begin{align*}
    & \card{\Ac_1} = 6 &
    & \card{\Ac_2} = 6 &
    & \card{\Om} = 10.
\end{align*}
Entries for $\RRo$ and $\Om$ are picked uniformly at random on the range $\left[-10, 10\right]$. We also pick $\Ad$ uniformly at random from $\Ac$. These inputs are passed through Algorithm $\Azdp$, along with the integer programming approach outlined in Appendix \ref{sec:alg-a0-ip}. We apply support enumeration to identify all Nash equilibria in the outputs of our algorithms. Every trial yielded $\Ad$ as the unique Nash equilibrium. Moreover, the optimal modification costs attained by our two approaches were always consistent within an error of $10^{-7}$ attributed to float imprecision. This supports our claim that our algorithms provide correct solutions. 

Next, we check the runtime complexity by scaling $\card{\Ac}$ and $\card{\Om}$. Here, we generate our input data as described above, ensuring to exclude data generation from our time measurements. For each input size, we perform $50$ repetitions of Algorithm $\Azdp$, measuring the total elapsed time. 

\begin{center}
\begin{tikzpicture}
\begin{axis}[
    title={Runtime for $50$ iterations of Algorithm $\Azdp$ ($\card{\Ac_2} = 100$)},
    xlabel=$\card{\Ac_1}$,
    ylabel=$\card{\Om}$,
    zlabel={Time (seconds)}
]
\addplot3[
    surf
] 
coordinates {
(10, 10, 0.06728601455688477) (10, 50, 0.08365726470947266) (10, 100, 0.09328532218933105) (10, 200, 0.10855770111083984) (10, 300, 0.12537622451782227) (10, 400, 0.13810014724731445) (10, 500, 0.14896392822265625) (10, 600, 0.16190767288208008) (10, 700, 0.173431396484375) (10, 800, 0.1821601390838623) (10, 900, 0.19285273551940918) (10, 1000, 0.20063161849975586)

(100, 10, 0.5774922370910645) (100, 50, 0.7002785205841064) (100, 100, 0.7589352130889893) (100, 200, 0.8401086330413818) (100, 300, 0.9226458072662354) (100, 400, 0.9737792015075684) (100, 500, 1.0132677555084229) (100, 600, 1.0588483810424805) (100, 700, 1.095005750656128) (100, 800, 1.1193158626556396) (100, 900, 1.1291849613189697) (100, 1000, 1.1520380973815918)

(500, 10, 2.8005988597869873) (500, 50, 3.4391207695007324) (500, 100, 3.7220571041107178) (500, 200, 4.0802788734436035) (500, 300, 4.471395254135132) (500, 400, 4.721101522445679) (500, 500, 4.856089353561401) (500, 600, 5.062113523483276) (500, 700, 5.175363302230835) (500, 800, 5.254856109619141) (500, 900, 5.334980726242065) (500, 1000, 5.390238285064697)

(1000, 10, 5.57322883605957) (1000, 50, 6.90128231048584) (1000, 100, 7.4184465408325195) (1000, 200, 8.16275668144226) (1000, 300, 8.919299602508545) (1000, 400, 9.387130737304688) (1000, 500, 9.670968770980835) (1000, 600, 10.062345027923584) (1000, 700, 10.292425870895386) (1000, 800, 10.451161623001099) (1000, 900, 10.565197467803955) (1000, 1000, 10.674848079681396)

(1500, 10, 8.424988508224487) (1500, 50, 10.383661031723022) (1500, 100, 11.120664834976196) (1500, 200, 12.21170711517334) (1500, 300, 13.38481855392456) (1500, 400, 14.070987701416016) (1500, 500, 14.48316764831543) (1500, 600, 15.089397430419922) (1500, 700, 15.433679580688477) (1500, 800, 15.633321523666382) (1500, 900, 15.792678833007812) (1500, 1000, 15.945148944854736)

(2000, 10, 11.220191717147827) (2000, 50, 13.812153816223145) (2000, 100, 14.838740110397339) (2000, 200, 16.219172954559326) (2000, 300, 17.780407667160034) (2000, 400, 18.73727512359619) (2000, 500, 19.288519859313965) (2000, 600, 20.100563287734985) (2000, 700, 20.488830089569092) (2000, 800, 20.80468511581421) (2000, 900, 21.066144943237305) (2000, 1000, 21.272650241851807)

(2500, 10, 14.016608715057373) (2500, 50, 17.232468366622925) (2500, 100, 18.562923908233643) (2500, 200, 20.378288745880127) (2500, 300, 22.251075267791748) (2500, 400, 23.409446954727173) (2500, 500, 24.143051385879517) (2500, 600, 25.139069080352783) (2500, 700, 25.59890842437744) (2500, 800, 26.081085920333862) (2500, 900, 26.300424098968506) (2500, 1000, 26.588749647140503)

(3000, 10, 16.644092082977295) (3000, 50, 20.806584119796753) (3000, 100, 22.264296293258667) (3000, 200, 24.39317226409912) (3000, 300, 26.7255117893219) (3000, 400, 28.093962907791138) (3000, 500, 28.943512439727783) (3000, 600, 30.103095293045044) (3000, 700, 30.717108488082886) (3000, 800, 31.160414218902588) (3000, 900, 31.471003770828247) (3000, 1000, 31.864861011505127)

(4000, 10, 22.432027339935303) (4000, 50, 27.530016899108887) (4000, 100, 29.6343994140625) (4000, 200, 32.535956382751465) (4000, 300, 35.555272579193115) (4000, 400, 37.38176155090332) (4000, 500, 38.53745675086975) (4000, 600, 40.038082122802734) (4000, 700, 40.91953730583191) (4000, 800, 41.65477633476257) (4000, 900, 42.162832260131836) (4000, 1000, 42.40646481513977)

(5000, 10, 28.182419300079346) (5000, 50, 34.426724672317505) (5000, 100, 37.044355630874634) (5000, 200, 40.624919414520264) (5000, 300, 44.64346885681152) (5000, 400, 50.67576503753662) (5000, 500, 49.1145122051239) (5000, 600, 50.522528886795044) (5000, 700, 51.721699237823486) (5000, 800, 52.90037941932678) (5000, 900, 54.024823904037476) (5000, 1000, 55.09347605705261)
};
\end{axis}
\end{tikzpicture}
\end{center}
Our data demonstrates the expected linear dependence on $\card{\Ac}$, along with a logarithmic dependence on $\card{\Om}$. We note that the logarithmic dependence on $\card{\Om}$ is consistent with Theorem~\ref{thm:alg0} because our action space is large. We perform an additional experiment to show that as $\card{\Om}$ becomes much larger than $\card{\Ac}$, the $\BigOh{\card{\Om}\log\card{\Om}}$ pattern arises. 
\begin{center}
\begin{tikzpicture}
\begin{axis}[
    title={Runtime for $50$ repetitions of Algorithm $\Azdp$ using large $\card{\Om}$},
    xlabel={$\card{\Om}$},
    ylabel={Time (seconds)}
]

\addplot[
    color=blue,
    mark=square,
    ]
    coordinates {
    (100, 0.7571992874145508) (1000, 1.145888328552246) (5000, 1.710533618927002) (10000, 2.260711193084717) (20000, 3.3408608436584473) (30000, 4.476373910903931) (40000, 5.642657995223999) (50000, 6.899082183837891) (60000, 8.247092485427856) (70000, 9.5615394115448) (80000, 10.809693574905396) (90000, 12.163147211074829) (100000, 13.606081008911133)
    };
\end{axis}
\end{tikzpicture}
\end{center}

\section{Appendix}

\subsection{$\Azip$ with Integer Programs} \label{sec:alg-a0-ip}

We reformulate \eqref{eq:framework-a0-inner} as a binary integer program by defining one-hot vectors as decision variables for each $\ac \in \Ac$. Let $\tilde{\RR}\left(\ac\right)$ be a vector of $\left|\Om\right|$ binary decision variables. To ensure we have a one-hot decision vector, we add the constraint that $1^\top_{\left| \Om \right|} \tilde{\RR}\left(\ac\right) = 1$. We can then construct the reward matrix $\RR$ as follows. 
\begin{align*}
    & \RR\left(\ac\right) = \left\langle\tilde{\RR}\left(\ac\right), \omega\right\rangle &
    & \omega = \begin{bmatrix}
        \omega_1 \\ \vdots \\ \omega_{\left|\Om\right|}
    \end{bmatrix}
\end{align*}
We can think of $\tilde{\RR}\left(\ac\right)$ as an indicator function for the condition $\RR\left(\ac\right) = \omega_i$. 

The remaining work needed to formulate the binary integer program is to resolve the strict inequalities from the $\sNE$ constraint. We do this by relaxing the strict inequalities into weak inequalities with a gap of $\iota > 0$. 
\begin{align}
    \min_\RR & \left\| \RR - \RRo \right\|_1 \label{eq:azlp} \\
    \text{s.t. }
    & \RR\left(a_1, \ad_2\right) + \iota \leq \RR\left(\ad_1, \ad_2\right) \leq \RR\left(\ad_1, a_2\right) - \iota \nonumber \\
    & \quad \forall a_1 \in \Ac_1 \quad \forall a_2 \in \Ac_2 \quad \text{s.t. $a_1 \neq \ad_1$ and $a_2 \neq \ad_2$} \nonumber \\
    & \RR\left(a\right) = \left\langle\tilde{\RR}\left(\ac\right), \omega\right\rangle, \forall\; \ac \in \Ac \nonumber \\
    & 1^\top_{\left| \Om \right|} \tilde{\RR}\left(\ac\right) = 1, \forall\; \ac \in \Ac,\nonumber \\
    & \tilde{\RR}\left(\ac\right) \in \left\{0, 1\right\}^{\left|\Om\right|}, \forall\; \ac \in \Ac \nonumber
\end{align}

\begin{rmk} \label{lem:azeq} 
\eqref{eq:azlp} is equivalent to \eqref{eq:framework-a0-inner} for every $\iota \leq \Delta$, where the value gap $\Delta$ is defined as
\begin{align}
\Delta = \min_{\omega \neq \omega' \in \Om} \left| \omega - \omega' \right|. \label{eq:dxz}
\end{align}\end{rmk}

\subsection{$\Aodp$ with Integer Programs} \label{sec:alg-a1-ip}
For $\Ao$, we have an integer program similar to ~\eqref{eq:azlp} with the $\iota$ relaxation, where the strict inequalities are replaced by weak inequalities with a gap $\iota > 0$.
\begin{align}
\displaystyle\min_{\tilde{\RR}} & \left\|\RR - \RRo\right\|_{1} \label{eq:aolp}
\\ \text{\;s.t.\;} & \omega^\top \tilde{\RR_i}\left(\ad_i, a_{-i}\right) \geq \omega^\top \tilde{\RR_i}\left(a_i, a_{-i}\right) + \iota,\nonumber
\\ & \hspace{2em} \forall\; a_i \neq \ad_i, \forall\; a_{-i} \in \Ac_{-i}, i \in \left[n\right],\nonumber
\\ & \RR\left(\ac\right) = \left\langle\tilde{\RR}\left(\ac\right), \omega\right\rangle, \forall\; \ac \in \Ac \nonumber
\\ & 1^\top_{\left| \Om \right|} \tilde{\RR}\left(\ac\right) = 1, \forall\; \ac \in \Ac,\nonumber
\\ & \tilde{\RR}\left(\ac\right) \in \left\{0, 1\right\}^{\left| \Om \right|}, \forall\; \ac \in \Ac.\nonumber
\end{align}

\begin{rmk} [$\Aoip$ Equivalence] \label{lem:aoeq} 
~\eqref{eq:aolp} is equivalent to \eqref{eq:framework-a1-inner} for every $\iota < \Delta$, with the same $\Delta$ defined in~\eqref{eq:dxz}.

\end{rmk}

\subsection{Integer Programs for Zero-sum Markov Games} \label{sec:alg-a0-ip-mg}

Binary integer programs can also be applied to Markov game redesign by adding the Q value function constraints. We present the following integer program formulation to modify the zero-sum Markov game with state space $S$, reward function $\RR$ and transition $\mathcal{P}$ so that $\ad$ becomes the unique Markov perfect equilibrium.

\begin{align}
    \min_\RR & \left\| \RR - \RRo \right\|_1 \label{eq:azlpmg} \\
    \text{s.t. }
    & \QQ\left(s, a_1, \ad_2 \left(s\right)\right) + \iota \leq \QQ\left(s, \ad_1 \left(s\right), \ad_2 \left(s\right)\right) \leq \QQ\left(s, \ad_1, a_2 \left(s\right)\right) - \iota \nonumber \\
    & \quad \forall s \in S \quad \forall a_1 \in \Ac_1 \quad \forall a_2 \in \Ac_2 \quad \text{s.t. $a_1 \neq \ad_1 \left(s\right)$ and $a_2 \neq \ad_2 \left(s\right)$} \nonumber \\
    & \QQ(s, a_1, a_2) = \RR(s, a_1, a_2) + \gamma \sum_{s' \in S} \mathcal{P}\left(s' | s, a_1, a_2\right) \QQ(s', \ad_1 \left(s'\right), \ad_2 \left(s'\right)) \nonumber \\
    & \quad \forall s \in S \quad \forall a_1 \in \Ac_1 \quad \forall a_2 \in \Ac_2 \nonumber \\
    & \RR\left(\ac\right) = \left\langle\tilde{\RR}\left(\ac\right), \omega\right\rangle, \forall\; \ac \in \Ac \nonumber \\
    & 1^\top_{\left| \Om \right|} \tilde{\RR}\left(\ac\right) = 1, \forall\; \ac \in \Ac,\nonumber \\
    & \tilde{\RR}\left(\ac\right) \in \left\{0, 1\right\}^{\left|\Om\right|}, \forall\; \ac \in \Ac \nonumber
\end{align}

Here, the $\iota \leq \Delta$ condition is not sufficient for~\eqref{eq:azlpmg} to produce the optimal solution; however, solving the problem for sufficiently small $\iota$ or a sequence of decreasing $\iota$'s will lead to the exact optimal solution.

\vspace{3em}
\begingroup
\let\clearpage\relax
\let\newpage\relax
\bibliography{dinash}
\endgroup

\end{document}